\documentclass{article}
\pdfoutput=1
\usepackage[margin=.7in]{geometry}
\usepackage[title]{appendix}

\usepackage{natbib}
\setcitestyle{authoryear,open={(},close={)}} 

\usepackage{amsmath,amsfonts, amssymb, threeparttablex, amsthm}
\newcommand{\E}{\mathbb{E}}
\newcommand{\logit}{\text{logit}}
\newcommand{\N}{\text{N}}
\newcommand{\Ber}{\text{Bernoulli}}
\newcommand\ind{\protect\mathpalette{\protect\independenT}{\perp}}
\def\independenT#1#2{\mathrel{\rlap{$#1#2$}\mkern2mu{#1#2}}}

\def\bSig\mathbf{\Sigma}

\newtheorem{assumption}{Assumption}
\newtheorem{lemma}{Lemma}

\usepackage[figuresright]{rotating}

\setcounter{footnote}{2}

\title{Identifying and estimating effects of sustained interventions under parallel trends assumptions}

\author{\textbf{Audrey Renson}$^{1,2,*}$, \textbf{Michael G. Hudgens}$^{3}$, \textbf{Alexander P. Keil}$^{1}$,\\ \textbf{Paul N. Zivich}$^{1}$, \textbf{and Allison E. Aiello}$^4$  \\  \bigskip \\\normalsize
$^1$Department of Epidemiology, University of North Carolina at Chapel Hill, Chapel Hill, North Carolina, U.S.A.\\  \normalsize
$^2$Carolina Population Center, University of North Carolina at Chapel Hill, Chapel Hill, North Carolina, U.S.A.\\  \normalsize
$^{3}$Department of Biostatistics, University of North Carolina, 
Chapel Hill, North Carolina, U.S.A. \\  \normalsize
$^4$Columbia Aging Center and Department of Epidemiology, Mailman School of Public Health,
\\ \normalsize Columbia University, New York, New York, U.S.A. \\ \normalsize
$^*$ Email: audrey.o.renson@gmail.com}

\begin{document}

\maketitle

\begin{abstract}
Many research questions in public health and medicine concern sustained interventions in populations defined by substantive priorities. Existing methods to answer such questions typically require a measured covariate set sufficient to control confounding, which can be questionable in observational studies. Differences-in-differences relies instead on the parallel trends assumption, allowing for some types of time-invariant unmeasured confounding. However, most existing difference-in-differences implementations are limited to point treatments in restricted subpopulations. We derive identification results for population effects of sustained treatments under parallel trends assumptions. In particular, in settings where all individuals begin follow-up with exposure status consistent with the treatment plan of interest but may deviate at later times, a version of Robins' g-formula identifies the intervention-specific mean under SUTVA, positivity, and parallel trends. We develop consistent asymptotically normal estimators based on inverse-probability weighting, outcome regression, and a double robust estimator based on targeted maximum likelihood. Simulation studies  confirm theoretical results and support the use of the proposed estimators at realistic sample sizes. As an example, the methods are used to estimate the effect of a hypothetical federal stay-at-home order on all-cause mortality during the COVID-19 pandemic in spring 2020 in the United States.
\end{abstract}

%
%

\textbf{Keywords:} causal inference; difference-in-differences; g-formula; observational study; unmeasured confounding

\section{Introduction}
Many epidemiologic  and other empirical studies concern the effects of sustained treatment strategies on population average outcomes over time. A sustained treatment or intervention is one that sets values of a time-varying exposure via a predetermined plan or algorithm.  For example, clinical studies are often concerned with optimal dosing plans for therapeutic drugs, and policy investigations are often concerned with policies that determine exposure distributions repeatedly over time for the population residing in a jurisdiction.

Existing approaches to estimating  effects of sustained interventions in well-defined populations include g-computation \citep{Robins1986}, inverse probability of treatment weighted (IPTW) marginal structural models \citep{Robins2000, Robins2000a}, g-estimation of structural nested models \citep{robins1989analysis}, and double robust methods such as augmented IPTW \citep{Bang2005} and targeted maximum likelihood \citep{Vanderlaan2012}. Importantly, these approaches base causal identification on a sequential version of exchangeability \citep{Robins1986}, also known as sequential ignorability or no unmeasured confounders \citep{Robins2000}. Sequential exchangeability posits that the potential outcomes are independent of treatment, given the history of some set of measured (possibly time-varying) covariates and treatment; this assumption is unverifiable and can be implausible in many settings. For example, individuals may select medical treatments based on unmeasured risk factors, and public policies are decided in highly complex political contexts that may influence health. 

In contrast, difference-in-differences (DID) methods typically base identification on parallel trends assumptions rather than sequential exchangeability \citep{Ashenfelter1985, roth2022s}. Parallel trends assumptions posit that time trends in average potential outcomes are independent of the observed treatment \citep{Ashenfelter1985, Marcus2020}.  DID methods typically focus on the average treatment effect in the treated for a treatment occurring at a single time point, although recently extensions have considered certain types of sustained treatment regimes \citep[for a review, see][]{roth2022s}. In particular, \citet{callaway2021difference} and \citet{ chaise2020hetero, chaise2021several} consider effects conditional on each observed treatment path in a monotonic (i.e., staggered adoption) setting, meaning that values of the observed time-varying treatment can either increase or decrease over time, but not both. Somewhat more generally, \citet{chaise2021inter} consider interventions fixing a (possibly non-monotonic) binary or ordinal exposure to its baseline status, focusing on unconditional cost-effectiveness ratios and outcome regression estimators. Relevant to the present work, these recent DID developments have included doubly robust estimators \citep{callaway2021difference, sant2020}.

This paper considers a more general setting and different estimands than previous DID implementations, focusing on marginal effects of general sustained treatment strategies under parallel trends assumptions. The remainder of this paper is organized as follows.  Section 2 introduces notation and the assumed data structure. Section 3 presents new identification results for an intervention-specific mean under parallel trends, where identifying formulas are modifications of Robins' (1986) g-computation algorithm formula (g-formula). Section 4 then presents consistent and asymptotically normal (CAN) estimators based on inverse probability weighting, outcome regression, and a double robust estimator that combines both. Section 5 presents simulation results that support the identification result and theoretical large sample properties of the proposed estimators. Section 6 presents an example estimating the number of lives that would have been saved by a U.S. federal stay-at-home order during the COVID-19 pandemic in spring 2020. Section 7 considers sensitivity analysis for violations of parallel trends and application of the poposed approach to dynamic treatment regimens. Section 8 concludes.

\section{Preliminaries}\label{sec2}
    \subsection{Data}
    Suppose data $O_{it}=\{W_{it},A_{it},Y_{it}\}$ are observed on $i=1,2,...,n$ individuals (or units) at time points $t=0,1,...,\tau$, where: $W_{it}$ are (possibly vector-valued) covariates; $A_{it}$ are discrete, possibly multivariate treatments realized after $W_{it}$; and $Y_{it}$ are outcomes realized after $A_{it}$, all measured without error.  Denote history of a variable with overbars, e.g., $\overline A_{it}=(A_{i0}, A_{i1}, ..., A_{it})$, with $\overline A_i\equiv\overline A_{i\tau}$ and $A_{ik}=\{\emptyset\}$ for $k<0$ by convention.   Upper case is used throughout to refer to random variables, lower case refers to specific realizations, and scripts refer to the support. The $i$ subscript is omitted unless needed to resolve ambiguity. Throughout, it is assumed that $\overline O_i\equiv\{\overline W_i, \overline A_i, \overline Y_i\}$ $(i=1,2,...,n)$ represent independent and identically distributed (iid) draws from a relevant target population.
    
    Assume the data come from a \textit{staggered discontinuation design}, defined as follows. Suppose the target estimand is $\E\{Y_t(\overline a^*)\}$ where $\overline a^*$ denotes the treatment strategy or intervention plan of interest, and $Y_t(\overline a^*)$ denotes a potential outcome; i.e., the value $Y_t$ would take under the intervention setting $\overline A_t=\overline a_t^*$. The approach in this paper requires that in the observed data distribution, $\Pr(A_0=a_0^*)=1$, or in other words that $A_{i0}=a_{0}^*$ for all $i$. Such a scenario is said to be a staggered discontinuation design with respect to the treatment plan  $\overline a^*$ of interest, because units begin follow-up under the treatment plan but may discontinue at later points in a staggered way. Note that we do not require monotonic treatment assignment, in contrast to recent DID papers \citep[e.g.,][]{Goodman-bacon2018,callaway2021difference}.
    
    \subsection{Motivating example}\label{sec:ex_intr}
    
    Consider the question, ``what effects on all-cause mortality would a U.S. federal stay-at-home order have had in spring 2020 during the COVID-19 pandemic?'' Let $Y_{it}$ be a binary indicator that individual $i$ died during week $t$, $t=0,1,...,11$, measured as weeks since  April 6, 2020. Let $A_{it}$ be a binary indicator that the state in which individual $i$ was living during week $t$ was under a state-level stay-at-home or shelter-in-place order. Suppose it is of interest to estimate $\E\{Y_t(\overline 1)\}-\E(Y_t),$ the difference in U.S. mortality rates under a hypothetical federal stay-at-home order vs. under the observed treatment trajectory (i.e., the ``natural course''). As of April 6, 43/50 U.S. states were under stay-at-home orders, which were discontinued at times ranging from late April to late June, with the exception of California which continued through December (Figure \ref{fig:sah}). Thus, the observed treatment trajectories give rise to a staggered discontinuation design with respect to the treatment plan $\overline a^*=\overline 1$ setting everyone to remain under stay-at-home order in those 43 states. The methods developed below can be used to draw inference about what would have happened had such a policy been implemented.

	\begin{figure}[h]
		\centerline{\includegraphics{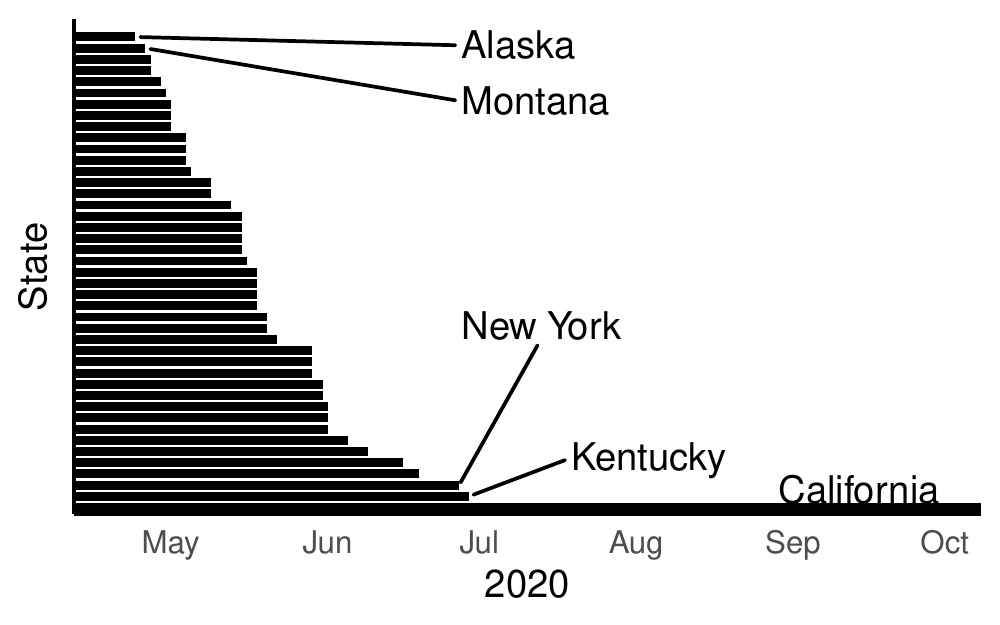}}
		\caption{Dates of state-issued stay-at-home orders in U.S. states during the COVID-19 pandemic in 2020\label{fig:sah}}
	\end{figure}

\section{Identification}\label{sec:id}
In this section we consider identification, given data $\overline O$, of the quantity $\mu_t\equiv\E\{Y_t(\overline a^*)\}$, the mean outcome at time $t$, under the intervention to set all individuals to $\overline A_i=\overline a^*$.  Throughout, it is assumed that interest lies in only one intervention $\overline a^*$. Note that $\mu_t$ depends on $\overline a^*$, which is left implicit for notational simplicity.  Consider the following assumptions:
    \begin{assumption}\label{asn:sutva} (Stable Unit Treatment Value Assumption [SUTVA]): If $\overline A_{it} = \overline a_t^*$, then $Y_{it}=Y_{it}(\overline a_t^*)$ for $t \in \{0,1,...\tau\}$.
    \end{assumption}
    In words, Assumption \ref{asn:sutva} requires that an individual's potential outcomes are not affected by other individuals' treatments (i.e., no interference), allowing us to index potential outcomes by each individual's treatments alone. Assumption \ref{asn:sutva} also requires that the treatment $\overline{a}^*$ is defined precisely enough so that, for individuals whose observed treatment equals $\overline{a}_t^*$,  observed outcomes can stand in for counterfactual outcomes under a hypothetical intervention $\overline{a}_t^*$. Implicit in Assumption \ref{asn:sutva} is that the future cannot affect the past, or $Y_{it}(\overline a_k^*)=Y_{it}(\overline a_t^*)$ for $k>t$. Thus, SUTVA would be violated, for example, if individuals were able to anticipate future treatments and change their behavior. This particular violation has typically been addressed under a separate ``no anticipation'' assumption in the DID literature \citep[e.g., ][]{callaway2021difference}. SUTVA is an unverifiable assumption in observational studies, though it is sometimes possible to test for the presence of interference \citep{Halloran2016}.
    \begin{assumption}\label{asn:pos} (Positivity): If $f(\overline w_t|\overline A_{t-1}=\overline a_{t-1}^*)>0,$ then $f(\overline a_t^*|\overline W_t=\overline w_t,\overline A_{t-1}=\overline  a_{t-1}^*)>0$, for $\overline w_t \in \mathcal{\overline{W}}_t; t\in\{1,2,...,\tau\}$.
    \end{assumption}
    Here and throughout, $f(x|\cdot)$ refers to a conditional density if $X$ is continuous, and a conditional probability mass function if $X$ is discrete. Assumption \ref{asn:pos} requires that units whose treatment history up to time $t-1$ is consistent with the regime in question ($\overline a^*)$ have positive probability of remaining under treatment plan $\overline a^*$ at time $t$. Positivity can sometimes be a verifiable assumption. In particular, if both $\overline W$ and $\overline A$ are low-dimensional and discrete, then among units with $\overline A_{t-1}=\overline a_{t-1}^*,$ if one observes units who remain under $\overline a^*$ at time $t$ in every stratum of $\overline W_t,$ this implies positivity in the population with probability 1 (but not necessarily the reverse).
    \begin{assumption}\label{asn:pt} (Parallel trends): For $t \in \{1,2,...,\tau\}, k\leq t:$
    	\[
    	\E\{Y_t(\overline a^*)-Y_{t-1}(\overline a^*)|\overline W_k, \overline A_{k-1}=\overline a_{k-1}^*\}=\E\{Y_t(\overline a^*)-Y_{t-1}(\overline a^*)|\overline W_k, \overline A_k=\overline a_k^*\} \]
    \end{assumption}
    In words, Assumption \ref{asn:pt} states that, among individuals whose treatment status is consistent with the intervention $\overline a^*$ up to time $k-1$, had (counter to fact) all individuals followed the intervention through time $t$, trends would have been parallel for those who do and do not follow the intervention at time $k$ but have equal covariate histories. This assumption is very similar to those adopted in papers describing event study and staggered adoption DID designs such as \citet{Goodman-bacon2018} and \citet{callaway2021difference}, and nearly identical to Assumption 12 in \citet{chaise2021inter}. The latter differs by considering only certain types of treatment regimes and presuming a linear relation between covariates and counterfactual trends. Parallel trends is unverifiable, though closely related conditions can often be checked \citep{Roth2019}. 
    Note that $W_{it}$ may include prior outcomes $Y_{im}$ for $m<t$. However, if $Y_{i,t-1}$ is included in $W_{it}$, then the parallel trends assumption is equivalent to sequential exchangeability, in which case existing causal inference methods for observational data with a longitudinal exposure can be used \citep[e.g.,][]{Robins1986, Robins2000, Vanderlaan2012}. 
    
     The following Lemma presents the main identification results in this paper, which show that Assumptions \ref{asn:sutva}-\ref{asn:pt} are sufficient to equate $\mu_t$ to a function of the observed data distribution. 
    \begin{lemma}\label{lem:gform} (Parallel trends g-formula) Define the functional (i.e., statistical parameter)
    $\psi_t \equiv \E(Y_0)+\sum_{k=1}^t \int \E(Y_k-Y_{k-1}|\overline W_k =\overline w_k,  \overline A_k =\overline a_k^*)\prod_{m=0}^k dF(w_m|\overline w_{m-1},  \overline a_{m-1}^*)$. Under a staggered discontinuation design and if Assumptions \ref{asn:sutva}-\ref{asn:pt} hold, then $\psi_t = \mu_t$.
    \end{lemma}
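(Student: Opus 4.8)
The plan is to write $\mu_t$ as a telescoping sum of counterfactual one-step increments and identify each increment separately. From the identity $Y_t(\overline a^*) = Y_0(\overline a^*) + \sum_{k=1}^t \{Y_k(\overline a^*) - Y_{k-1}(\overline a^*)\}$, taking expectations gives $\mu_t = \E\{Y_0(\overline a^*)\} + \sum_{k=1}^t \E\{Y_k(\overline a^*) - Y_{k-1}(\overline a^*)\}$. For the baseline term, the staggered discontinuation design forces $A_0 = a_0^*$ almost surely, so Assumption \ref{asn:sutva} (including its no-anticipation content) gives $Y_0 = Y_0(\overline a^*)$ and hence $\E\{Y_0(\overline a^*)\} = \E(Y_0)$, matching the leading term of $\psi_t$. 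It then remains to show, for each $k \in \{1,\ldots,t\}$, that $\E\{Y_k(\overline a^*) - Y_{k-1}(\overline a^*)\}$ equals the $k$-th summand of $\psi_t$.

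Fix $k$ and abbreviate $\Delta_k = Y_k(\overline a^*) - Y_{k-1}(\overline a^*)$. The key observation is that Assumption \ref{asn:pt}, read with outer index equal to $k$ and varying conditioning index $j \le k$, states precisely that $\Delta_k$ is mean-independent of the time-$j$ treatment given covariate history and a treatment-consistent past:
\[
\E(\Delta_k \mid \overline W_j, \overline A_{j-1} = \overline a_{j-1}^*) = \E(\Delta_k \mid \overline W_j, \overline A_j = \overline a_j^*), \qquad j = 1,\ldots,k.
\]
This is the increment-level analogue of sequential exchangeability, and it is exactly what drives Robins' g-formula recursion; the difference is that parallel trends supplies mean-independence of the counterfactual \emph{difference} rather than of the counterfactual \emph{level}, which is what permits time-invariant unmeasured confounding.

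First I would establish the diagonal base of the recursion: at $j = k$, consistency (Assumption \ref{asn:sutva}, with no anticipation) gives $Y_k = Y_k(\overline a^*)$ and $Y_{k-1} = Y_{k-1}(\overline a^*)$ on the event $\overline A_k = \overline a_k^*$, so $\E(Y_k - Y_{k-1} \mid \overline W_k = \overline w_k, \overline A_k = \overline a_k^*) = \E(\Delta_k \mid \overline W_k = \overline w_k, \overline A_k = \overline a_k^*)$, and the display at $j = k$ rewrites this as $\E(\Delta_k \mid \overline W_k = \overline w_k, \overline A_{k-1} = \overline a_{k-1}^*)$. I would then peel off one time point at a time: integrating $\E(\Delta_k \mid \overline W_j, \overline A_{j-1} = \overline a_{j-1}^*)$ against $dF(w_j \mid \overline w_{j-1}, \overline a_{j-1}^*)$ collapses $W_j$ by iterated expectations to $\E(\Delta_k \mid \overline W_{j-1}, \overline A_{j-1} = \overline a_{j-1}^*)$, and the display at index $j-1$ then converts the conditioning event $\overline A_{j-1} = \overline a_{j-1}^*$ back to $\overline A_{j-2} = \overline a_{j-2}^*$, readying the next step. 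Iterating from $j = k$ down to $j = 1$, and finally marginalizing over $W_0$ (noting $\overline A_0 = a_0^*$ a.s.\ makes its conditioning vacuous), yields
\[
\E(\Delta_k) = \int \E(Y_k - Y_{k-1} \mid \overline W_k = \overline w_k, \overline A_k = \overline a_k^*)\prod_{m=0}^k dF(w_m \mid \overline w_{m-1}, \overline a_{m-1}^*),
\]
the $k$-th summand of $\psi_t$; summing over $k$ and adding the baseline term completes the proof.

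I expect the main obstacle to be bookkeeping rather than any single hard step: one must verify that every conditional expectation and conditional density appearing in the recursion is well defined, which is where Assumption \ref{asn:pos} (positivity) enters — it guarantees that the treatment-consistent conditioning events have positive probability at each stage, so that no iterated-expectation step conditions on a null set. I would also be careful to invoke the no-anticipation content of Assumption \ref{asn:sutva} explicitly when equating $Y_k$ with $Y_k(\overline a^*)$ (rather than merely $Y_k(\overline a_k^*)$) on the treatment-consistent event, and to confirm that the staggered-discontinuation condition is precisely what makes the $m = 0$ factor $dF(w_0)$ and the baseline term $\E(Y_0)$ legitimate stand-ins for their counterfactual counterparts.
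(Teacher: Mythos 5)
Your proposal is correct and takes essentially the same route as the paper's proof: the same telescoping decomposition, the same use of SUTVA plus the staggered discontinuation design for the baseline term, and the same alternation of iterated expectation and parallel trends to identify each increment $\E\{Y_k(\overline a^*)-Y_{k-1}(\overline a^*)\}$. The only difference is cosmetic --- you peel the nested conditional expectations from the inside out (from $j=k$ down to $j=1$), whereas the paper's induction builds them up from the outside in (from $m=1$ up to $m=k$).
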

     Here and throughout, $F(\cdot|\cdot)$ refers to a conditional cumulative distribution function. Lemma \ref{lem:gform} states that the target causal quantity $\mu_t$ is identified by the parameter $\psi_t$. The parameter $\psi_t$ is referred to as the \textit{parallel trends g-formula} because it represents a modification of the usual g-formula (the dependence of $\psi_t$ on $\overline a^*$ is also left implicit).  A formal proof of Lemma \ref{lem:gform} by induction is presented in Appendix A. Here we give a less formal explanation to build intuition. We have:
    \begin{align*}
    	\mu_t\equiv\E\{Y_t(\overline a^*)\} &= \E\{Y_0(\overline a^*)\} + \sum_{k=1}^t \E\{Y_k(\overline a^*)-Y_{k-1}(\overline a^*)\} \\
    	&= \E\{Y_0(\overline a^*)\} + \sum_{k=1}^t \E[\E\{Y_k(\overline a^*)-Y_{k-1}(\overline a^*)|W_0\}] \\
    	&= \E\{Y_0(\overline a^*)\} + \sum_{k=1}^t \E[\E\{Y_k(\overline a^*)-Y_{k-1}(\overline a^*)|A_0=a_0^*, W_0\}]
    \end{align*}
    where the first equality follows by adding and subtracting constants, the second by iterated expectation, and the third by Assumption \ref{asn:pt}. Repeatedly applying iterated expectation and Assumption \ref{asn:pt}, we have:
    \begin{align*}
    	\mu_t &= \E\{Y_0(\overline a^*)\} + \sum_{k=1}^t \E(\cdot\cdot\cdot\E[\E\{Y_k(\overline a^*)-Y_{k-1}(\overline a^*)|\overline{A}_k=\overline a^*_k, \overline W_k\}|\overline A_{k-1}=\overline a^*_{k-1}, \overline W_{k-1} ]\cdot\cdot\cdot|A_0=a_0^*, W_0)\\
    	&=\E(Y_0) + \sum_{k=1}^t \E[\cdot\cdot\cdot\E\{\E(Y_k-Y_{k-1}|\overline A_k=\overline a^*_k, \overline W_k)|\overline A_{k-1}=\overline a^*_{k-1}, \overline W_{k-1} \}\cdot\cdot\cdot|A_0=a_0^*, W_0]\\
    	&=\psi_t
    \end{align*}
    where the second equality follows from Assumption \ref{asn:sutva} and the last equality from iterated expectation. 

\section{Estimators}\label{sec:est}
This section presents estimators for the statistical parameter $\psi_t$, which equals the target quantity $\mu_t$ under the above stated assumptions. The estimators in this section utilize existing estimators of $\mu_t$ under sequential exchangeability rather than parallel trends, all of which are CAN estimators of the g-formula by virtue of being solutions to unbiased estimating equations \citep{Stefanski2002}. Since $\psi_t$ is a continuous function of several g-formulas, the same function applied to estimators of those g-formulas is a CAN estimator for  $\psi_t$. The remainder of this section formalizes this logic and gives examples of specific estimators that function in this capacity. The estimators presented in this section are provided in an R package (see Supporting Information).
    
    \subsection{General form}\label{sec:est_gen}
    Here we derive a general form of a CAN estimator for the target statistical parameter, $\psi_t$. First, define:
    \begin{equation}\label{eq:gform}
    \phi_{j,k}=\int \E(Y_j|\overline A_k=\overline a_k^*, \overline W_k=\overline w_k)\prod_{m=0}^k dF(w_m|\overline  W_{m-1}=\overline w_{m-1}, \overline A_{m-1}=\overline a_{m-1}^*)
    \end{equation}
    Equation (\ref{eq:gform}) is the g-formula, developed in the context of identifying parameters like $\mu_t$ under sequential exchangeability. Here, sequential exchangeability is not assumed, and therefore $\phi_{j,k}$ is not interpretable as a causal parameter; instead existing estimators of the statistical parameter $\phi_{j,k}$ are used to assemble estimators of $\psi_t$ (which equals the causal parameter $\mu_t$ under Assumptions \ref{asn:sutva}-\ref{asn:pt}) by noting that $\psi_t = \phi_{0,0} + \sum_{k=1}^t \big(\phi_{k,k}-\phi_{k-1,k}\big)$. Next, suppose there is an estimator $\widehat \phi_{jk}$ of $\phi_{jk}$ which is the solution to an unbiased estimating function $d_{\phi_{jk}}(O;\phi_{jk})$; i.e., $0=\E\{d_{\phi_{jk}}(O;\widehat\phi_{jk})\}.$ Let $\boldsymbol\phi=(\phi_{0,0}, \phi_{0,1},...,\phi_{t-1, t}, \phi_{t,t}).$ Then simply define
    \begin{equation}\label{eq:ee_psi}
    d_{\psi_t}(O; \boldsymbol\phi,\psi_t)=\phi_{0,0}+\sum_{k=1}^t\big(\phi_{k,k}-\phi_{k-1, k}\big) -\psi_t    
    \end{equation}
    Clearly, $\E\{d_{\psi_t}(O; \boldsymbol\phi,\psi_t)\}=0$, indicating that an estimator $(\boldsymbol{\widehat\phi},\widehat\psi_t)$ that jointly solves $0=\sum_i d_{\psi_t}(O_i; \boldsymbol{\widehat\phi}, \widehat\psi_t)$ will yield a CAN estimator $\widehat\psi_t$ for $\psi_t$. 
    
    The following subsections show how different options for $d_{\phi_{jk}}(O;\phi_{jk})$ (including IPTW, g-computation, and targeted maximum likelihood estimation [TMLE]) can be constructed and stacked with (\ref{eq:ee_psi}) to form estimators of $\psi_t$ that inherit desirable properties (e.g., consistency, asymptotic normality, double robustness). 
    
    \subsection{Inverse probability of treatment weighted (IPTW) estimator}
   Define stabilized inverse probability of treatment weights (IPTWs) as 
   $\pi_k(\overline a; \overline W)= \linebreak \prod_{m=1}^k f(a_m|\overline a_{m-1})/\prod_{m=1}^k f(a_m|\overline a_{m-1}, \overline W_m)$.     \citet{Robins2000} (Lemma 1.1) showed that $\phi_{j,k}=c_{jk}(\overline a^*_k),$ where $c_{jk}(\cdot)$ is the unique function such that $\E[q_{jk}(\overline A_k)\{Y_j - c_{jk}(\overline A_k)\}\pi_k(\overline A; \overline W)]=0$ for all functions $q_{jk}(\overline A_k)$ where the expectation exists. The equation \\$\E[q_{jk}(\overline A_k)\{Y_j - c_{jk}(\overline A_k)\}\pi_k(\overline A_k; \overline W)]=0$ defines a regression of $Y_j$ on $\overline A_k$ weighted by $\pi_k(\overline A; \overline W)$. In the context of sequential exchangeability, estimators based on this weighted regression formulation are called IPTW-marginal structural model estimators and given a causal interpretation \citep{robins1992estimating, Robins2000}. However, in the context of this paper, sequential exchangeability is not assumed and so the weighted regression equation does not have a causal interpretation on its own. Instead, the above result together with results from Section \ref{sec:est_gen}, implies that an IPTW estimator of $\psi_t$ can be formed by using a linear combination of IPTW marginal structural model estimators for $\phi_{j,k}$. 
    
    For simplicity of presentation, assume that for $k=0,1,...,t$, $f(a_k|\overline a_{k-1})$ and $f(a_k|\overline a_{k-1}, \overline w_k)$ are known up to a finite dimensional parameter. That is, define $g_{0, k}(\overline a, \emptyset) \equiv \prod_{s=0}^k f(a_s|\overline a_{s-1})$ and $g_{0, k}(\overline a, \overline w) \equiv \prod_{s=0}^k f(a_s|\overline a_{s-1}, \overline w_s)$, and say we are willing to assume that $g_{0, k}(\overline a, \emptyset)$ is uniquely determined by the parametric model $g_k(\overline a, \emptyset; \alpha_0)$, and similarly that $g_{0, k}(\overline a, \overline w) = g_k(\overline a, \overline w; \alpha_1)$, where $\alpha_0$ and $\alpha_1$ are finite dimensional parameter vectors. Say we have an estimator that solves an unbiased estimating equation for $(\alpha_0, \alpha_1)$. For example, $g_k(\overline a, \emptyset; \alpha_0)$ and $g_k(\overline a, \overline w; \alpha_1)$ may consist of generalized linear models with parameters estimated by maximum likelihood. Specify $c_{jk}(\overline A_k)$ as some appropriate functional form for the expected value of $Y_j$ conditional on $\overline A_k$ in the weighted data distribution, such as $c_{jk}(\overline A_k) = \gamma_{0jk} + \gamma_{1jk}I(\overline A_k=\overline a_k^*)$  (i.e., leaving the model unrestricted when $\overline A_k=\overline a_k^*$).  Then an estimator $\widehat\phi_{j,k}^{IPTW}$ that solves $0=c_{jk}(\overline a^*_k) - \phi_{j,k}$ is CAN for $\phi_{j,k}$ if $c_{jk}(\overline A_k), g_k(\overline A, \emptyset; \alpha_0)$ and $g_k(\overline A, \overline w; \alpha_1)$ are correctly specified. Finally, stack the score equations for $g_k(\overline a, \emptyset; \alpha_0)$ and $g_k(\overline a, \overline w; \alpha_1)$, along with $0=c_{jk}(\overline a^*_k) - \phi_{j,k}$ $(k=0,1,...,t; j=k-1,k)$ and equation (\ref{eq:ee_psi}) to yield an estimator for $\psi_t$, say $\widehat\psi_t^{IPTW}.$  

In other words, the IPTW estimator for the target parameter of interest is $\widehat\psi_t^{IPTW}=\widehat\phi_{0,0}^{IPTW} + \sum_{k=1}^t\big(\widehat\phi_{k,k}^{IPTW}-\widehat\phi_{k-1,k}^{IPTW}\big)$, where $\widehat\phi_{j,k}^{IPTW}$ are estimators of each appropriate g-formula parameter based on an IPTW model. Note that under our assumption set, $\widehat\phi_{j,k}^{IPTW}$ are not estimators of causal quantities in and of themselves, but simply functions of the observed data distribution that may be assembled appropriately to form the causal estimator $\widehat\psi_t^{IPTW}$. Clearly, $\widehat\psi_t^{IPTW}$ solves an estimating equation that is unbiased if $g_k(\overline A, \emptyset; \alpha_0), g_k(\overline A, \overline w; \alpha_1)$ and $c_{jk}(\overline A_k)$ are all correctly specified, implying $\widehat\psi_t^{IPTW}$ is CAN for $\psi_t$ under the same conditions. However, IPTW estimators are known to be inefficient and $\widehat\psi_t^{IPTW}$ may similarly inherit this property. The following subsections present estimators that may improve on efficiency relative to IPTW.
\subsection{Iterated conditional expectation (ICE) estimator}
\citet{Bang2005} describe an estimator of $\phi_{j,k}$ based on the following iterated conditional expectation (ICE) representation:
    \begin{equation}\label{eq:ice}
    \phi_{j,k} = \E(\E[\cdot\cdot\cdot\E\{\E(Y_j|\overline A_k=\overline a_k^*,\overline W_k)|\overline A_{k-1}=\overline a_{k-1}^*,\overline W_{k-1}\}\cdot\cdot\cdot|\overline A_1=\overline a_1^*,\overline W_1]|A_0=a_0^*, W_0)    
    \end{equation}
which can equivalently be written as $\phi_{j,k} = \E\{Q_0^{j,k,0}(\overline a^*)\}$ where, for $m=0,1,...,k,$ $Q_0^{j,k,m}(\overline a^*)=\E\{Q_0^{j,k,m+1}(\overline a^*)|\overline A_m=\overline a_m^*, \overline W_m\}$ and $Q_0^{j,k,k+1}(\overline a^*)=Y_j.$ 

An estimator of $\psi_t$ can then be formulated based on this representation. For simplicity, say we are willing to assume that $Q_0^{j,k,m}(\overline a^*)$ are known up to a finite dimensional parameter for $m=0,1,...,k$. That is, assume $Q_0^{j,k,m}(\overline a^*)=Q^{j,k,m}(\overline a^*; \beta_m)$, where $\beta_m$ for $m=0,1,...,k,$ are finite dimensional parameters. For example, $Q^{j,k,m}(\overline a^*; \beta_m)$ may be a generalized linear model with parameters $ \beta_m$. Say we are in possession of an unbiased estimating function $d_{j,k,m}\{O, Q^{j,k,m+1}(\overline a^*; \beta_{m+1}); \beta_m\}$ for $\beta_m$. For example, if maximum likelihood is used, then $d_{j,k,m}\{O, Q^{j,k,m+1}(\overline a^*; \beta_{m+1}); \beta_m\}$ is the vector of first derivatives of the model log-likelihood with respect to $\beta_m$.  Note that including $Q^{j,k,m+1}(\overline a^*; \beta_{m+1})$ as an argument to the estimating function makes explicit the nested nature of the iterated expectations being modeled. The ICE estimator of $\phi_{j,k}$ is then defined \citep{Bang2005} as the solution $\widehat\phi_{j,k}^{ICE}$ to $\boldsymbol 0=\sum_{i=1}^n d_{j,k}(O_i; \phi_{j,k})$ where 
\[
d_{j,k}(O; \phi_{j,k}) = \begin{pmatrix}
    	d_{j,k,k}(O; \beta_k) \\
    	d_{j,k,k-1}\{O, Q^{j,k,k}(\overline a^*; \beta_k); \beta_{k-1}\} \\
    	\vdots \\
    	d_{j,k,0}\{O, Q^{j,k,1}(\overline a^*; \beta_1); \beta_0\} \\
    	Q^{j,k,0}(\overline a^*; \beta_0) -  \phi_{j,k}
    \end{pmatrix}
\]
Then, simply stack $d_{j,k}(O; \phi_{j,k})$ with (\ref{eq:ee_psi}) to yield an estimator  $\widehat\psi_t^{ICE}$ for $\psi_t$. In other words, the ICE estimator of the target parameter is $\widehat\psi_t^{ICE}=\widehat\phi_{0,0}^{ICE} + \sum_{k=1}^t\big(\widehat\phi_{k,k}^{ICE}-\widehat\phi_{k-1,k}^{ICE}\big)$, where each $\widehat\phi_{j,k}^{ICE}$ is an estimator of the corresponding g-formula parameter based on ICE g-computation. Clearly, $\widehat\psi_t^{ICE}$ solves an unbiased estimating equation whenever all the iterated outcome models $\{Q^{j,k,m}(\overline a^*; \beta_m): k=0,1,...,t; j=k,k-1; m=0,1,...,k+1\}$ are correctly specified. Estimators of $\phi_{j,k}$ based on outcome regression generally have smaller asymptotic variance that IPTW estimators, and $\widehat\psi_t^{ICE}$ may inherit this property. 
    
    \subsection{Doubly robust targeted maximum likelihood estimator (TMLE)}
    IPTW estimators are only guaranteed to be CAN if the treatment models are correctly specified, and ICE estimators are only guaranteed to be CAN if all the outcome models are correctly specified. Doubly robust estimators are CAN if either the outcome or treatment models are correct (but not necessarily both), which is an advantage because one is rarely certain that models are correctly specified. 
        
    Doubly robust estimators of $\phi_{j,k}$ generally consist of augmenting the ICE algorithm by including predicted values from the treatment models used to construct IPTWs in some way. Such estimators are called semiparametric efficient if they solve the estimating equation corresponding to the following efficient influence curve \citep{Vanderlaan2012, Tran2019}:
    \begin{equation}\label{eq:eic}
    \sum_{m=0}^k \frac{I(\overline A_m=\overline a_m^*)}{g_{0, m}(\overline a^*, \overline w)}\{Q_0^{j,k,m+1}(\overline a^*)-Q_0^{j,k,m}(\overline a^*)\} + Q_0^{j,k,0}(\overline a^*) - \phi_{j,k}
    \end{equation}
    with $g_{0, m}(\overline a^*, \overline w)$ and $Q_0^{j,k,m}$ defined as in previous sections. Many estimators correspond to this efficient influence curve, meaning they all have the smallest asymptotic variance of any regular asymptotically linear estimator in this class \citep{Bang2005,Vanderlaan2012}. We present one such example of a targeted maximum likelihood estimator (TMLE) which may outperform others in finite samples \citep{Tran2019}. 
    
    First consider the TMLE of $\phi_{j,k}.$ For simplicity, assume that outcome models $\{Q_0^{j,k,m}(\overline a^*): m=0,1,...,k\}$ and treatment models $\{g_{0, m}(\overline a^*): m=0,1,...,k\}$ are known up to a finite dimensional parameter. That is, assume $g_{0, m}(\overline a^*)=g_m(\overline a^*; \alpha_m)$ and $Q_0^{j,k,m}(\overline a^*)=Q^{j,k,m}(\overline a^*; \beta_m)$, where $\alpha_m$ and $\beta_m$ are finite dimensional parameters, $m=0,1,...,k$. Then proceed as follows:
    
    \begin{enumerate}
    	\item For $m=0,1,...,k$, estimate $\alpha_m$, for example using maximum likelihood. Denote estimators of $\alpha_m$ as $\widehat\alpha_m$ and corresponding estimators of $g_m(\overline a^*; \alpha_m)$ as $g_m(\overline a^*; \widehat\alpha_m)$.
    	\item For $m=k,$ estimate $\beta_{m}$, for example using maximum likelihood, denoting this estimator $\widehat\beta_m$. Calculate $Q^{j,k,m}_i(\overline a^*; \widehat\beta_m)$ for each unit $i$ and denote this estimator $\widehat Q^{j,k,m}_i(\overline a^*)$. Note that these are model predictions that implicitly depend on the data, and so vary across units $i$.
    	\item Also for $m=k,$ update the initial fit $\widehat Q^{j,k,m}_i(\overline a^*)$ by fitting a new model, defined as $h\{Q^{j,k,m,*}_i(\overline a^*)\} = h\{\widehat Q^{j,k,m}_i(\overline a^*)\} + \epsilon_{j,k,m}$, where $h(\cdot)$ is an appropriate link function, $\epsilon_{j,k,m}$ is an intercept, and $Q^{j,k,m,*}_i(\overline a^*)$ are conditional expectations under the updated model. Note, the response variable in this model is $Q^{j,k,k+1}_i(\overline a^*)=Y_j.$ The logit link is recommended to ensure the estimator respects bounds implied by the data (if $Y_j$ is not bounded by $(0,1)$, it will need to be appropriately transformed for the logit function to be defined) \citep{Vanderlaan2012}. Estimators $\widehat Q^{j,k,m,*}_i(\overline a^*)$ for the updated fit are found by maximizing an appropriate weighted likelihood with weights $I(\overline A_{im}=\overline a_m^*)/g_m(\overline a^*; \widehat\alpha_m)$.
    	\item Repeat steps 2-3, estimating $Q^{j,k,m}(\overline a^*; \beta_m)$ and $Q^{j,k,m,*}(\overline a^*)$ for $m=k-1,k-2,...,0.$
    	\item The TMLE for $\phi_{j,k}$ is then defined as $\widehat\phi_{j,k}^{TMLE}=n^{-1}\sum_{i=1}^n\widehat Q^{j,k,0,*}_i(\overline a^*)$.
    \end{enumerate}
    Then, the TMLE for $\psi_t$ is defined as $\widehat\psi_t^{TMLE}=\widehat\phi_{0,0}^{TMLE} + \sum_{k=1}^t\big(\widehat\phi_{k,k}^{TMLE}-\widehat\phi_{k-1,k}^{TMLE}\big)$. Since $\widehat\phi_{j,k}^{TMLE}$ solves the estimating equation corresponding to the efficient influence curve (\ref{eq:eic}), it will be CAN for $\phi_{j,k}$ so long as either (i) the set of outcome models \{$Q^{j,k,m}(\overline a^*; \beta_m):m=0,1,...,k\}$ are correctly specified, or (ii) the set of treatment models $\{g_m(\overline a^*; \alpha_m): m=0,1,...,k\}$ are correctly specified, but it is not necessary that both be correct. Therefore, if one of these two conditions holds for all $k=0,1,...,t$ and $j=k-1,k$, then $\widehat\psi_t^{TMLE}$ will be CAN for $\psi_t$. The double robustness property carries through to $\widehat\psi_t^{TMLE}$ by virtue of the fact that the estimating equation in (\ref{eq:ee_psi}) is unbiased if the estimating equations for all the $\phi_{j,k}$ are unbiased, which is the case for $\widehat\phi_{j,k}^{TMLE}$ under conditions (i) or (ii) above.

    \section{Simulation study}
    A simulation study was conducted to evaluate the finite sample performance of the IPTW, ICE, and TMLE estimators  described in Section \ref{sec:est} when Assumptions \ref{asn:sutva}-\ref{asn:pt} hold and all models were correctly specified. The TMLE estimator was also evaluated under misspecification of either the treatment or outcome model. Code for the simulation is provided in an R package (see Supporting Information).
    
    \subsection{Data generating distribution}\label{sec:sim_dgd}
    Data $O_{it}=\{W_{it}=(W_{it1},W_{it2}), A_{it}, Y_{it}\}; i=1,2,...,n; t=0,...,5$ were generated from the distributions $U_{i0} \sim \Ber\{\logit^{-1}(\omega_0)\}$; $W_{it1} \sim \Ber \{\logit^{-1}(\alpha_{0t} + \alpha_{1t}A_{i,t-1})\}$; $W_{it2} \sim \N(\gamma_{0t}+\gamma_{1t}A_{i,t-1}, 1)$; $A_{it}|A_{i,t-1}=0 \sim \Ber \{\logit^{-1}(\delta_{0t} + \delta_{1t}U_{i0} + \delta_{2t}W_{it1}+ \delta_{3t}W_{it2}+ \delta_{4t}W_{it2}^2)\}$; and $Y_{it} \sim N(\beta_{0t} + \beta_{1t}W_{it1}+ \beta_{2t}W_{it2}+ \beta_{3t}W_{it2}^2 + \beta_{4t}A_{it} + \theta U_{i0} , 1)$; with $A_{i0}=0$ and $A_{it}=1$ if $A_{i,t-1}=1$. The monotonic treatment assignment for $A_{it}$ is not necessary, but simplifies analysis. In all analyses, $U_{i0}$ is treated as unmeasured, but all other variables are observed. Parallel trends hold in this setup, as one sufficient set of conditions for parallel trends (proven in  Appendix B) is that the only unmeasured variables (here, $U_{i0})$ are time-invariant, do not affect time-varying covariates, and enter the outcome model linearly with constant coefficient over time (here, $\theta$ is constant over $t$).
    
    We simulated 1,000 datasets each for sample sizes $n=$1,000, 10,000, and 100,000. All parameters were generated from a $N(0.2, 1)$ distribution, with the same values for each parameter used across all simulation runs. The target parameter was $\mu_5 = \E\{Y_5(\overline 0)\}=-3.98$, the mean outcome at end of follow-up, had everyone remained untreated. The true difference compared to the natural course was $\E\{Y_5(\overline 0)\}-\E(Y_5)=-3.98 - (-3.88) = -0.10.$ 
    \subsection{Estimator implementation}
    For each simulated dataset, the estimators $\widehat\mu_t^{IPTW},$ $\widehat\mu_t^{ICE},$ and $\widehat\mu_t^{TMLE}$ were calculated using correctly specified generalized linear regression models estimated using maximum likelihood. Additionally, $\widehat\mu_t^{IPTW}$ was calculated with treatment models misspecified, $\widehat\mu_t^{ICE}$ with outcome models misspecified, and $\widehat\mu_t^{TMLE}$ with treatment models, outcome models, or both misspecified. All misspecified models omitted the term for $W_{it2}^2$ at each time $t$. 
    \subsection{Simulation results}
    Table \ref{tbl:sim} shows estimates of the bias, variance, and p-values from a Lilliefors test for normality, for each estimator of $\mu_5$. The results suggest that all stated theoretical properties hold approximately in simulated data. First, when all models are correctly specified, all estimators appear approximately unbiased with decreasing variance as the sample size increases. When outcome models and treatment models are misspecified, ICE and IPTW estimators appear biased, respectively. TMLE appears consistent when either the treatment or outcome models are correctly specified, but not when both are misspecified, supporting the double robustness property. Lastly, all estimators appear normally distributed for all sample sizes considered, based on Lilliefors tests. \\
 
\begin{table}[]
    \caption{Simulation results}
    \label{tbl:sim}
    \centering
    \begin{tabular}{lrrrrrrrrr}
    \hline
    \multicolumn{1}{c}{ } & \multicolumn{3}{c}{$n=1,000$} & \multicolumn{3}{c}{$n=10,000$} & \multicolumn{3}{c}{$n=100,000$} \\
    \hline
    estimator & variance$^1/n$ & bias$^2$ & p$^3$ & variance$^1/n$ & bias$^2$ & p$^3$ & variance$^1/n$ & bias$^2$ & p$^3$\\
    \hline
         ice\_qfal & 4.3 & 1.15 &   0.68 & 4.6 & 1.10 & 0.81 & 4.5 & 1.16 & 0.76\\
    			ice\_true & 4.2 & -0.04 &  0.82 & 4.5 & -0.02 & 0.23 & 4.2 & 0.02 & 0.45\\
    			iptw\_gfal & 4.5 & 1.17 &  0.86 & 4.7 & 1.17 & 0.49 & 4.6 & 1.25 & 0.69\\
    			iptw\_true & 6.0 & -0.17 & 0.37 & 6.3 & -0.08 & 0.87 & 7.4 & -0.01 & 0.57\\
    			tmle\_bfal & 4.4 & 1.19 &  0.51 & 4.6 & 1.15 & 0.65 & 4.5 & 1.22 & 0.62\\
    			tmle\_gfal & 4.2 & -0.08 & 0.86 & 4.5 & -0.04 & 0.28 & 4.3 & 0.02 & 0.21\\
    			tmle\_qfal & 5.9 & -0.09 & 0.15 & 6.4 & -0.07 & 0.78 & 7.6 & -0.02 & 0.41\\
    			tmle\_true & 5.4 & -0.03 & 0.82 & 5.6 & 0.01 & 0.11 & 5.7 & 0.01 & 0.69\\
    \hline \\
    \end{tabular}
    	\begin{tablenotes}
    			\item $^1$Empirical variance of estimates over 1000 simulated datasets.  $^2$Multiplied by 100. $^3$ P-value for Lilliefors test against the null hypothesis of normality.
    			\item Abbreviations: ice=iterated conditional expectation, iptw=inverse probability of treatment weighted, tmle=targeted maximum likelihood, qfal=outcome models misspecified, gfal=treatment models misspecified, bfal=both sets of models misspecified, true=all models correctly specified.
    		\end{tablenotes}
\end{table}

\section{COVID-19 application}\label{sec:ex_appl}

    \subsection{Data}
    This section presents an analysis of the motivating example, introduced in Section \ref{sec:ex_intr}. Code and data are provided in the R package didgformula (see Supporting Information). State-level weekly mortality data come from the Centers for Disease Control and Prevention's National Death Index, and weekly counts of COVID-19 cases from the COVID-19 Data Repository at the Center for Systems Science and Engineering at Johns Hopkins University. Data on state-level stay-at-home orders come from the COVID-19 U.S. State Policy database.  Though the outcome variable of interest is an individual-level indicator of death in week $t$, this variable is not directly observed; instead the observed data represent counts of deaths occurring in each state. Let $s=1,2,...,43$ be a state index, and let $Y_{ist}$ be an indicator of mortality during week $t$ for the $i$th individual ($i=1,...,n_s)$ living in state $s$, where $n_s$ denotes the population size in state $s$, and $n=\sum_{s=1}^{43}n_s\approx309$ million. The observed outcome variable is $Y_{st}=\sum_{i=1}^{n_s}Y_{ist}$, the state-level weekly sum of individual-level mortality counts, along with population counts $n_s$ (drawn from the 2010 Census). The observed treatment variable $A_{st}$ is an indicator of state $s$ being under stay-at-home order in week $t$. Finally, let $W_{st}$ be the change in confirmed COVID-19 cases reported per 100k population in the previous four weeks (i.e., the difference from week $t-4$ to $t$) in state $s$. Thus, in this example, the parallel trends assumption is conditional on the local state of the pandemic,  which may be plausible for pandemic-related policies \citep{callaway2021policy}. 
    
\subsection{Estimator implementation}
    
\subsubsection{IPTW}
For the treatment models, the following parametric models pooled over $k=1,...,11$ were assumed:
\begin{align*}
        f(A_{sk} | \overline A_{s,k-1}; \alpha_0) &=\Ber\{\logit^{-1}(\alpha_{00} + \alpha_{01} \omega(k) + \alpha_{02}A_{s,k-1})\} \\
        f(A_{sk} | \overline A_{s,k-1}, \overline W_{sk}; \alpha_1) &=  \Ber\{\logit^{-1}(\alpha_{10} + \alpha_{11} \omega(k) + \alpha_{12}A_{s,k-1} + \alpha_{13}\log W_{sk})\}
\end{align*}
where $\omega(k)$ is a natural cubic spline basis with 3 degrees of freedom for time $k$. The outcome model $c_{jk}(\overline A)=\gamma_{0jk}+\gamma_{1jk}I(\overline A_k=\overline 1)$, $k=1,...,11, j=k,k-1$ was specified, which allows the outcome to depend on the full exposure history. The parameters $\alpha_0=(\alpha_{00},\alpha_{01},\alpha_{02}) $ and $\alpha_1=(\alpha_{10}, ...,\alpha_{13})$ were estimated using maximum likelihood, weighted by $1/{n_s}$ to account for differing population sizes across states. Then, $\gamma_{0jk}, \gamma_{1jk}, k=1,...11, j=k,k-1$ were estimated by maximizing the state-level binomial likelihood weighted by inverse probability of treatment weights $\pi_k (\overline A; \overline W, \widehat \alpha)=\prod_{m=1}^k f(A_m|\overline A_{m-1}; \widehat\alpha_0 )/\prod_{m=1}^k f(A_m|\overline A_{m-1}, \overline W_m; \widehat\alpha_1)$, where $\widehat\alpha_0$ and $\widehat\alpha_1$ denote maximum likelihood estimators  of $\alpha_0$ and $\alpha_1$. Then estimators $\widehat\psi_t^{IPTW}, t=0,...,11$ were calculated as $\widehat\phi_{0,0}^{IPTW}+\sum_{k=1}^t\big( \widehat\phi_{k,k}^{IPTW} - \widehat\phi_{k-1,k}^{IPTW}\big)$, where $\widehat\phi_{j,k}^{IPTW}=\widehat\gamma_{0jk}+\widehat\gamma_{1jk}$ and $\widehat\gamma_{0jk},\widehat\gamma_{1jk}$ denote the weighted maximum likelihood estimators.

\subsubsection{ICE}
For ICE estimators, the following parametric outcome regression models pooled over $k=1,...,11$ were assumed:
\begin{align*}
    Q^{j,k,m}(\overline a^*; \beta_m) = \logit^{-1}\{\beta_{0jm} + \beta_{1jm}\omega(k) + \beta_{2jm}\omega(k)a_m^* +\beta_{3jm}\log W_{m} \}
\end{align*}
for $j=k,k-1$ and $m=k,k-1,...,0$, where again $\omega(k)$ refers to a natural cubic spline basis with 3 degrees of freedom. Note that, due to the monotonic treatment pattern, the interaction between time and treatment allows the outcome to depend on the full exposure history. The parameters $\beta_m$ were estimated by maximizing a binomial quasilikelihood, with estimators denoted  $\widehat\beta_m$. To account for varying state population sizes, state contributions to the quasilikelihood were weighted by $1/{n_s}$.  Finally, ICE estimators  $\widehat\psi_t^{ICE}, t=0,...,11$ were calculated as $\widehat\phi_{0,0}^{ICE}+\sum_{k=1}^t \big(\widehat\phi_{k,k}^{ICE} - \widehat\phi_{k-1,k}^{ICE}\big)$, where $\widehat\phi_{j,k}^{ICE}= \sum_{r=1}^{43} Q_r^{j,k,0}(\overline a^*; \widehat\beta_m) /43$ (as there are 43 states included in the analysis).

\subsubsection{TMLE}
For TMLE, the same treatment models as specified for IPTW were used, along with  the same outcome models as specified for ICE. Specifically, when estimating $\phi_{jk},$ for the $m$th ICE step ($m=k,k-1,...,0$), the TMLE updating step was performed by maximizing another weighted quasibinomial likelihood with response variable $Q_s^{j,k,m+1}(\overline a^*; \beta_{m+1})$ with an intercept and offset $Q_s^{j,k,m}(\overline a^*; \widehat\beta_m)$, weighted by $I(\overline A_k = \overline 1)/g_k(\overline A, \widehat\alpha_k).$ Predictions $\widehat Q_s^{j,k,m,*}(\overline a^*)$ from this model were then passed to the $(m-1)$th ICE step, and the process was repeated for $m=k,k-1,...,0$. Finally, $\widehat\psi_t^{TMLE}, t=0,...,11$ were calculated as $\widehat\phi_{0,0}^{TMLE}+\sum_{k=1}^t \big(\widehat\phi_{k,k}^{TMLE} - \widehat\phi_{k-1,k}^{TMLE}\big)$, where $\widehat\phi_{j,k}^{TMLE}= \sum_{r=1}^{43} \widehat Q_r^{j,k,0,*}(\overline a^*) /43.$

\subsubsection{Bootstrap standard errors and confidence intervals}
Standard errors were estimated using a nonparametric bootstrap. Specifically, for $B$ bootstrap replicates ($b=1,2,...,B$), a resampled outcome variable $Y_{st}^b=\sum_{i=1}^{n_s}Y_{ist}^b$ $(t=0,...,12)$ was drawn from a multinomial distribution with $n_s$ trials and probabilities $n_s^{-1}(Y_{s0}, Y_{s1}, ..., Y_{s,12}),$ where $Y_{s,12}$ denotes the number of individuals who survived beyond $t=11$ in state $s$. IPTW, ICE, and TMLE estimators $\widehat\psi_t^{IPTW,b},\widehat\psi_t^{ICE,b},\widehat\psi_t^{TMLE,b}$ were calculated on each replicate $(b=1,...,B$).  Then, Wald 95\% confidence intervals were computed using the standard deviation of bootstrap estimates.

\subsection{Results}
Figure \ref{fig:est_rates} shows results in the form of estimated U.S. weekly mortality rates per 100,000 person weeks over the study period under the natural course (red) and under the hypothetical sustained treatment of setting $A_t=1$ for all $t$, i.e., under a scenario where all 43 included states maintained stay-at-home orders through June 2020. The three estimators largely agree in their predictions that all-cause mortality rates would have been moderately lower throughout most of the study period, had stay-at-home orders remained in place. Translating the counterfactual mortality rate estimates to lives saved, if all causal and modeling assumptions hold, based on TMLE, stay-at-home orders remaining in place from April through June 2020 would have saved appoximately 11,100 (95\% CI: 6,800, 15,500) lives in those 43 states during the same time period. Results based on ICE were similar (point estimate: 11,300, 95\% CI: 6,900, 15,600), whereas IPTW gave a smaller point estimate and somewhat wider CI (point estimate: 4,100, 95\% CI: -500, 8,700). 

\begin{figure}[h]
    \centering
    \includegraphics{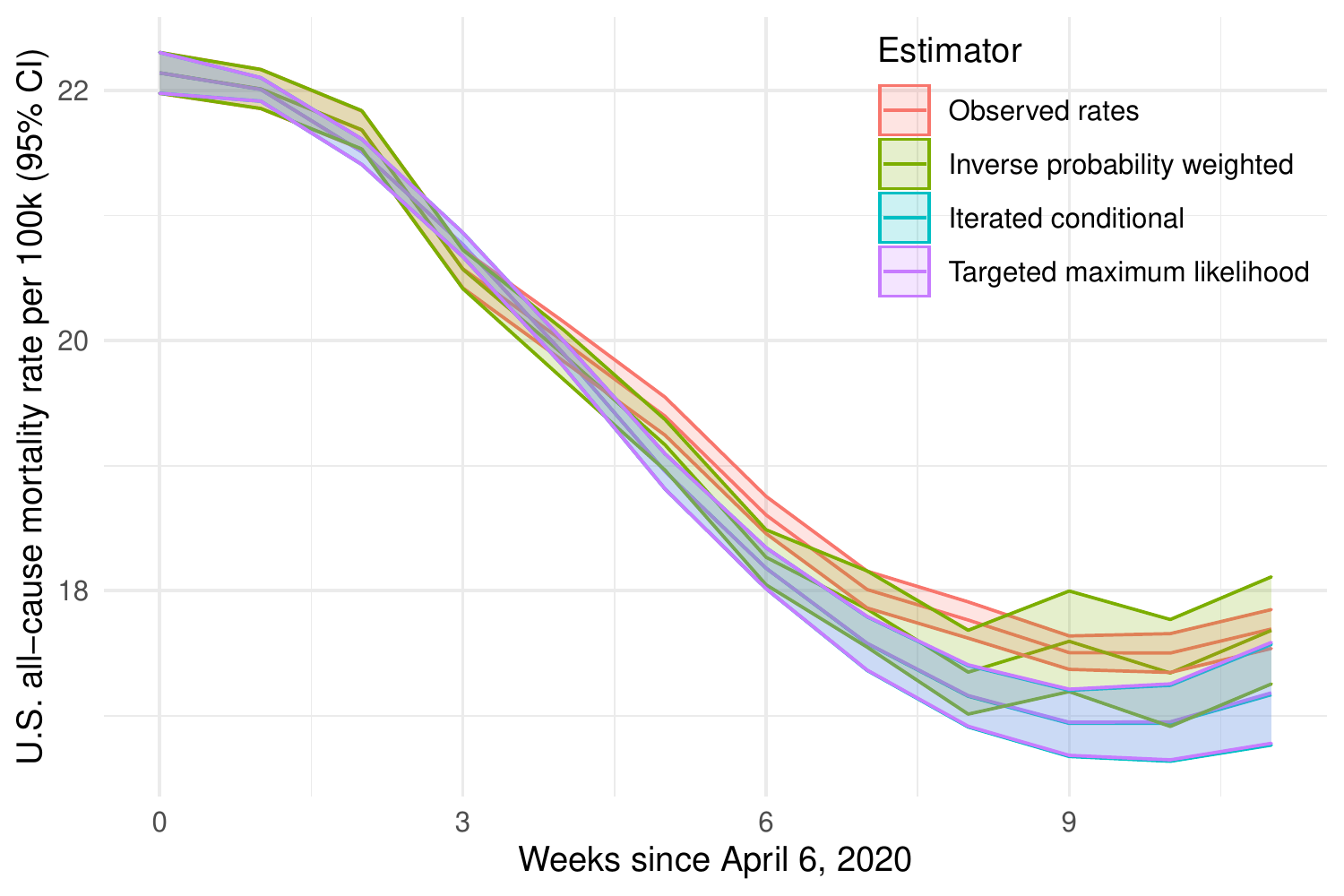}
    \caption{Estimated U.S. weekly mortality rates - observed (red) and estimated under hypothetical treatment setting all states to remain under stay-at-home order using IPTW (green), ICE (blue), and TMLE (purple). Note that TMLE and ICE estimates and 95\% CIs are nearly identical.}
    \label{fig:est_rates}
\end{figure}

\section{Extensions}
\subsection{Violations of parallel trends}
    In some applications, the parallel trends assumption (Assumption \ref{asn:pt}) may be questionable, and investigators may be interested in how inferences are altered by plausible deviations from parallel trends. A sensitivity analysis can be conducted as follows. Let
    \[
\Delta(\overline w_k, t) = \E\{Y_t(\overline a^*)-Y_{t-1}(\overline a^*)|\overline W_k=\overline w_k, \overline A_{k-1}=\overline a^*_{k-1}\}-\E\{Y_t(\overline a^*)-Y_{t-1}(\overline a^*)|\overline W_k=\overline w_k, \overline A_k=\overline a^*_k\},
    \]
    where $\Delta(\overline w_k, t)$ quantifies a deviation from parallel trends, which may depend on both the covariates $\overline w_k$ and time $t$. Then, consider the following statistical parameter:  
    \[
    \psi_t' = \E(Y_0) + \sum_{k=1}^t \int \bigg\{\E(Y_k-Y_{k-1}|\overline{A}_k=\overline a^*_k, \overline W_k=\overline w_k\} + \sum_{m=1}^k \Delta(\overline w_m, k) \bigg\} \prod_{m=0}^k dF(w_m|\overline w_{m-1}, \overline a_{m-1}^*).
    \]
    If Assumptions \ref{asn:sutva} and \ref{asn:pos} hold, then $\mu_t=\psi_t'$ (the proof follows from results in Appendix A). If a particular value is assumed known for $\Delta(\overline w_k, t)$, then estimation can proceed by defining
    \[
    \phi_{j,k}'=\int \bigg[\E\bigg\{Y_j + \sum_{m=1}^k \Delta(\overline w_m, k)|\overline A_t=\overline a_t^*, \overline W_k=\overline w_k\bigg\}  \bigg]\prod_{k=0}^t dF(w_k|\overline  W_{k-1}=\overline w_{k-1}, \overline A_{k-1}=\overline a_{k-1}^*),   
    \]
    and noting that $\psi_t' = \phi_{0,0} + \sum_{k=1}^t (\phi_{k,k}'-\phi_{k-1,k})$. As with $\phi_{j,k}$, the parameter $\phi_{j,k}'$ is simply a special case of the usual g-formula, where the outcome variable is $Y_j + \sum_{m=1}^k \Delta(\overline w_m, k)$. Thus, the IPTW, ICE, and TMLE estimators can be used, replacing outcome variables $Y_k$ with $Y_k + \sum_{m=1}^k \Delta(\overline w_m, k)$ (but not for $Y_{k-1}$). In practice, $\Delta(\overline w_k, t)$ will typically not be known, and thus estimates may be computed over a range of plausible values of $\Delta(\overline w_k, t)$. Differences in trends between subgroups of units before discontinuation occurs may be helpful in determining plausible values of $\Delta(\overline w_k, t)$ \citep{roth2019honest}.
\subsection{Dynamic regimes}
In addition to the static regimes considered above, the proposed approach can accommodate regimes where treatment decisions may depend on the history of covariates and/or treatments. Let $\overline g = \{g_0(w_0), g_1(\overline w_1), ..., g_{\tau}(\overline w_{\tau}) \}$ denote a dynamic regime, where $g_k(\overline w_k)$ returns the treatment value $a_k$ that would be assigned given covariate history $\overline w_k$. Note that $g_k(\cdot)$ may depend on treatment history as well, which we suppress for notational simplicity. Likewise let $Y_k(\overline g)$ be a potential outcome under treatment regime $\overline g$. Suppose interest is in the estimand $\mu_t^g = \E\{Y_t(\overline g)\}.$ Then, consider the following modifications to Assumptions 1-3.
\begin{assumption}\label{asn:sutva_dyn} (SUTVA for dynamic regimes): If $\overline A_{it} = \overline g_t(\overline W_{it})$, then $Y_{it}=Y_{it}(\overline g_t)$ for $t \in \{0,1,...\tau\}$.
\end{assumption}
\begin{assumption}\label{asn:pos_dyn} (Positivity for dynamic regimes): If $f\{\overline w_t|\overline A_{t-1}=\overline g_{t-1}(\overline W_{t-1})\}>0,$ then $f\{g_t(\overline w_t)|\overline W_t=\overline w_t,\overline A_{t-1}=\overline  g_{t-1}(\overline w_{t-1}) \}>0$, for $\overline w_t \in \mathcal{\overline{W}}_t; t\in\{1,2,...,\tau\}$.
\end{assumption}
\begin{assumption}\label{asn:pt_dyn} (Parallel trends for dynamic regimes): For $t \in \{1,2,...,\tau\}, k\leq t:$
    	\[
    	\E\{Y_t(\overline g)-Y_{t-1}(\overline g)|\overline W_k, \overline A_{k-1}=\overline g_{k-1}(\overline W_{k-1})\}=\E\{Y_t(\overline g)-Y_{t-1}(\overline g)|\overline W_k, \overline A_k=\overline g_k(\overline W_k)\} \]
\end{assumption}
\begin{lemma}\label{lem:gform_dyn} (Parallel trends g-formula, dynamic regimes) Define the functional (i.e., statistical parameter)
    \[\psi_t^g \equiv \E(Y_0)+\sum_{k=1}^t \int \E\{Y_k-Y_{k-1}|\overline W_k =\overline w_k,  \overline A_k =\overline g_k(\overline w_k)\}\prod_{m=0}^k dF\{w_m|\overline w_{m-1},  \overline g_{m-1}(\overline w_{m-1})\}\] Under a staggered discontinuation design and if Assumptions \ref{asn:sutva_dyn}-\ref{asn:pt_dyn} hold, then $\psi_t^g = \mu_t^g$.
\end{lemma}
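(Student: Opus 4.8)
The plan is to prove Lemma~\ref{lem:gform_dyn} by replicating the inductive argument used for Lemma~\ref{lem:gform} in Appendix~A, with the static plan $\overline a^*$ replaced throughout by the dynamic rule $\overline g$ evaluated along observed covariate histories, and with Assumptions~\ref{asn:sutva}--\ref{asn:pt} replaced by their dynamic analogues, Assumptions~\ref{asn:sutva_dyn}--\ref{asn:pt_dyn}. The staggered discontinuation design is now understood to require $A_0 = g_0(W_0)$ with probability one, so that the baseline conditioning event is vacuous.

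First I would decompose the target into a baseline term plus telescoping increments, $\mu_t^g = \E\{Y_0(\overline g)\} + \sum_{k=1}^t \E\{Y_k(\overline g) - Y_{k-1}(\overline g)\}$, and handle each increment $\E\{Y_k(\overline g) - Y_{k-1}(\overline g)\}$ separately by an inner induction over $m = 0, 1, \dots, k$. At step $m$ I would start from a conditional expectation given $\overline A_{m-1} = \overline g_{m-1}(\overline W_{m-1})$ and $\overline W_{m-1}$, apply iterated expectation to introduce $W_m$ into the conditioning set, and then invoke Assumption~\ref{asn:pt_dyn} to advance the treatment conditioning from $\overline A_{m-1} = \overline g_{m-1}(\overline W_{m-1})$ to $\overline A_m = \overline g_m(\overline W_m)$. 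Iterating from the baseline cell (where conditioning on $A_0 = g_0(W_0)$ is free by the design) up to $m = k$ yields a conditional expectation of the counterfactual increment given $\overline A_k = \overline g_k(\overline W_k)$ and $\overline W_k$. At that point Assumption~\ref{asn:sutva_dyn} replaces $Y_k(\overline g) - Y_{k-1}(\overline g)$ with the observed increment $Y_k - Y_{k-1}$, since $\overline A_k = \overline g_k(\overline W_k)$ implies $\overline A_{k-1} = \overline g_{k-1}(\overline W_{k-1})$, so consistency applies to both terms. Recognizing the remaining tower of conditional expectations as the iterated integral against $\prod_{m=0}^k dF\{w_m | \overline w_{m-1}, \overline g_{m-1}(\overline w_{m-1})\}$, and using $\E\{Y_0(\overline g)\} = \E(Y_0)$ by the design and consistency, assembles the increments into $\psi_t^g$.

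The main obstacle, and the only genuine departure from the static proof, is the covariate dependence of the conditioning event $\overline A_m = \overline g_m(\overline W_m)$: unlike the fixed set $\{\overline A_m = \overline a_m^*\}$, this is a random event depending on $\overline W_m$. The key observation that makes the argument go through is that every manipulation is carried out pointwise in the covariate history: on the event $\overline W_m = \overline w_m$ the rule prescribes the fixed value $\overline g_m(\overline w_m)$, so conditioning on $\{\overline A_m = \overline g_m(\overline W_m)\} \cap \{\overline W_m = \overline w_m\}$ coincides with conditioning on $\{\overline A_m = \overline g_m(\overline w_m)\} \cap \{\overline W_m = \overline w_m\}$, a static-treatment event within that stratum. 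Consequently each application of iterated expectation, parallel trends, and consistency reduces to its static counterpart stratum by stratum, and I would make this substitution explicit before invoking each step. Positivity (Assumption~\ref{asn:pos_dyn}) is used to guarantee that these conditioning events have positive probability throughout $\overline w_k \in \overline{\mathcal{W}}_k$, so that the conditional expectations and the integrating measures defining $\psi_t^g$ are well defined along the regime.
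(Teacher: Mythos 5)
Your proposal is correct and follows essentially the same route as the paper, which simply asserts that Lemma~\ref{lem:gform_dyn} ``follows from results in Appendix A,'' i.e., from rerunning the inductive argument for Lemma~\ref{lem:gform} with $\overline a^*$ replaced by $\overline g$ evaluated along covariate histories and Assumptions~\ref{asn:sutva}--\ref{asn:pt} replaced by Assumptions~\ref{asn:sutva_dyn}--\ref{asn:pt_dyn}. Your explicit treatment of the one genuine subtlety --- that the conditioning event $\overline A_m = \overline g_m(\overline W_m)$ reduces stratum by stratum to a static-treatment event on $\{\overline W_m = \overline w_m\}$ --- is exactly the point the paper leaves implicit, and it is handled correctly.
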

The proof of Lemma \ref{lem:gform_dyn} follows from results in Appendix A. Thus, the IPTW, ICE, and TMLE estimators described can be used, with $\phi_{j,k}$ appropriately redefined.

\section{Discussion}
This paper considers a new approach to identifying effects of sustained intervention strategies based on an assumption set that includes parallel trends. This assumption is popular in difference-in-differences because it allows for some degree of unmeasured confounding \citep{Zeldow2019}. Recently, parallel trends assumptions have been leveraged to target sustained treatment estimands, mainly considering certain types of treatment regimes \citep{callaway2021difference, chaise2020hetero, chaise2021several, chaise2021inter}. Relative to previous work, the main contribution of this paper is a framework for estimating marginal intervention-specific means for general treatment regimes (including dynamic regimes) under parallel trends. This is accomplished by building on IPTW, g-computation, and doubly-robust TMLE developed in the context of sequential exchangeability,   thus connecting disparate causal inference literatures from biostatistics \citep{Robins1986,Robins2000,Bang2005,Vanderlaan2012} and econometrics \citep{Ashenfelter1985, callaway2021difference}. Independently and concurrently with the present work, \citet{Shahn2022} developed g-estimation of stuctural nested models for general sustained treatment regimes under parallel trends, with results that imply identification for the intervention-specific means considered here. While it is possible (but complex) to estimate the latter quantity using the g-estimation approach of \citet{Shahn2022}, the main strength of g-estimation is in exploring effect heterogeneity by time-varying covariates.

Regarding the example presented in Section \ref{sec:ex_appl}, care should be taken when assuming parallel trends for pandemic-related outcomes without conditioning on pandemic state variables such as infection rates, as marginal parallel trends are incompatible with standard epidemic models \citep{callaway2021policy}. DID methods have previously been used to estimate effects of stay-at-home orders on the treated \citep[e.g.][]{fowler2021stay}. The methods in this paper allow for (i) a different target parameter that may more directly correspond to decisions facing policy makers and public health officials \citep{maldonado2002estimating}, and (ii) adjustment for time-varying pandemic state variables likely affected by prior treatment, which DID methods have only recently begun to consider \citep{callaway2021policy}. That said, assessing the effects of stay-at-home orders is complex, and a comprehensive analysis would need to consider potential biases not factored into the present analysis; e.g., there is likely some interference \citep{Haber2021}. Thus, the application results are not meant to inform policy or scientific conclusions.
    
The approach presented here may have application in many other contexts. Many U.S. state-level policies have changed in such a way as to accommodate a staggered discontinuation design, including in domains other than pandemic mitigation. Outside of staggered discontinuation designs, methods developed in this paper apply more generally in settings where baseline potential outcomes are identified.  For example, the approach could be used to estimate per-protocol effects in a clinical trial of a time-varying treatment regime with non-adherence.

Several areas for future research remain. First, it will be important to explore efficiency for competing estimators in this framework. Notably, the TMLE presented here is only known to be semiparametric efficient for the nuisance parameters $\phi_{jk}$ and not necessarily for the target parameter $\psi_t$ \citep{Vanderlaan2012}. Second, though parallel trends may be considered more plausible than sequential exchangeability in some settings, strategies for formally evaluating the assumption using domain knowledge, e.g. using causal diagrams, are in their infancy \citep[e.g.,][]{ghanem2022selection}. Finally, the focus of this paper was on settings where one treatment regime is of interest. Future research could consider extensions beyond a single regimen, but caution should be exercised when assuming parallel trends for multiple regimens, which would imply certain restrictions on treatment effect heterogeneity that may not be plausible in some settings \citep{Shahn2022}.


\section*{Acknowledgements}
The authors thank Dr. Whitney Robinson for helpful comments. This research was supported by the NIH grants T32-HD091058-02, T32-AI007001, R01 AI085073, and P2C-HD050924. The content is solely the responsibility of the authors and does not represent the official views of the National Institutes of Health.

\section*{Software}
The R package \verb|didgformula|, available at https://github.com/audreyrenson/didgformula, implements the estimators, simulation study (in vignette ``simulation''), and example results (in vignette ``example'') described in the paper. 

\section*{Data availability statement}
The data that support the findings in this paper are available as part of the R package \verb|didgformula|, available at https://github.com/audreyrenson/didgformula, in the dataset called \verb|stayathome2020|.

\begin{appendices}
\renewcommand*{\thesection}{\Alph{section}}
\setcounter{section}{0}
\renewcommand*{\theequation}{\Alph{section}\arabic{equation}}
\setcounter{equation}{0}
\section{Proof of Lemma 1}\label{app1}
	Here we provide a formal proof by induction of Lemma 1. 
	\begin{proof}
	First, note that, by adding and subtracting constants,
	\[
	\E[Y_t(\bar a^*)] = \E[Y_0(\bar a^*)] + \sum_{k=1}^t \E[Y_k(\bar a^*) - Y_{k-1}(\bar a^*)],
	\]
	and by SUTVA (Assumption 1), $ \E[Y_0(\bar a^*)]= \E[Y_0]$ under a staggered discontinuation design because $A_{i0}=a_0^*$ for all $i$. Thus it remains to prove that
	\[
	\E[Y_k(\bar a^*) - Y_{k-1}(\bar a^*)] = \int \E[Y_k - Y_{k-1}|\bar W_k=\bar w_k, \bar A_k=\bar a_k^*]\prod_{m=0}^{k}dF(w_m|\bar w_{m-1}, \bar a^*)
	\]
	for $1 \leq k \leq t$. As our induction hypothesis, assume temporarily that the following holds for some $m$ such that $1\leq m < k \leq t$:
	\begin{equation}\label{ind_hyp_pt}
		\E[Y_k(\bar a^*) - Y_{k-1}(\bar a^*)] = \int \E[Y_k(\bar a^*) - Y_{k-1}(\bar a^*)|\bar W_m=\bar w_m, \bar A_m=\bar a_m^*]\prod_{s=0}^m dF(w_s|\bar w_{s-1}, \bar a^*_{s-1})
	\end{equation}
	If (\ref{ind_hyp_pt}) is true, then, so long as $m\leq t-1$, it follows that:
	\begin{align}\label{ind_impl_pt}
		\nonumber \E[Y_k(\bar a^*) - Y_{k-1}(\bar a^*)] &= \int \E[Y_k(\bar a^*) - Y_{k-1}(\bar a^*)|\bar W_{m+1}=\bar w_{m+1}, \bar A_m=\bar a^*_m]\prod_{s=0}^{m+1} dF(w_s|\bar w_{s-1}, \bar a^*_{s-1}) \\
		&= \int \E[Y_k(\bar a^*) - Y_{k-1}(\bar a^*)|\bar W_{m+1}=\bar w_{m+1}, \bar A_{m+1}=\bar a^*_{m+1}]\prod_{s=0}^{m+1} dF(w_s|\bar w_{s-1}, \bar a^*_{s-1}) 
	\end{align}
	where the first equality is by iterated expectation and the second by Assumption 3. In other words, if (\ref{ind_hyp_pt}) holds for some $m$, (\ref{ind_impl_pt}) proves that the same statement holds for $m+1\leq t$. Next, note that, for all $k$ such that $1\leq k \leq t$, by iterated expectation and Assumption 3:
	\begin{align*}
		\E[Y_k(\bar a^*) - Y_{k-1}(\bar a^*)] &= \int \E[Y_k(\bar a^*) - Y_{k-1}(\bar a^*)|\bar W_1=\bar w_1]\prod_{s=0}^1 dF(w_s | w_{s-1}, a_{s-1}^*) \\
		&= \int \E[Y_k(\bar a^*) - Y_{k-1}(\bar a^*)|\bar W_1=\bar w_1, \bar A_1=\bar a_1^*]\prod_{s=0}^1 dF(w_s | w_{s-1},a_{s-1}^*) 
	\end{align*}
	which shows that (\ref{ind_hyp_pt}) holds for $m=1$. Then, by (\ref{ind_impl_pt}), (\ref{ind_hyp_pt}) holds for $m=2,...,k$. Finally, for $m=k$, by Assumption 1 we have:
	\begin{align*}
		\E[Y_t(\bar a^*) - Y_{t-1}(\bar a^*)] &= \int \E[Y_t(\bar a^*) - Y_{t-1}(\bar a^*)|\bar W_k=\bar w_k, \bar A_k=\bar a_k^*]\prod_{s=0}^k dF(w_s | \bar w_{s-1}, \bar a_{s-1}^*) \\
		&= \int \E[Y_t - Y_{t-1}|\bar W_k=\bar w_k, \bar A_k=\bar a_k^*]\prod_{s=0}^k dF(w_s | \bar w_{s-1}, \bar a_{s-1}^*) 
	\end{align*}
    \end{proof}
    
\setcounter{equation}{0}    
    
\section{Proof of parallel trends in simulation data generating distribution}\label{app2}
To prove that parallel trends holds under the simulation setup, we first introduce three additional assumptions which trivially hold in the simulation, and then show that they are sufficient to guarantee parallel trends.
\begin{assumption}\label{asn:aec} (Additive equi-confounding) For all $u, u' \in \mathcal{U}_0$,
\begin{align*}
    \E[Y_t(\bar a^*)&|\bar W_t, \bar A_t=\bar a_t^*, U_0=u] - E[Y_t(\bar a^*)|\bar W_t, \bar A_t=\bar a_t^*, U_0=u']\\&=E[Y_{t-1}(\bar a^*)|\bar W_t, \bar A_t=\bar a_t^*, U_0=u] - E[Y_{t-1}(\bar a^*)|\bar W_t, \bar A_t=\bar a_t^*, U_0=u'] \\
\end{align*}   
\end{assumption}
\begin{assumption}\label{asn:ign} (Latent ignorability)
	\[
	(Y_t(\bar{a}_t),	Y_{t-1}(\bar{a}_t) )  \ind A_k | U_0, \bar W_k, \bar A_{k-1}=\bar a_{k-1}^*\\
	\text{for } 1 \leq t \leq T  \text{, } k \leq t \]
\end{assumption}
\begin{assumption}\label{asn:cov_ind_u} (Measured covariates conditionally independent of unmeasured ones)
	\[
	W_t \ind U_0 | \bar W_{t-1}, \bar A_{t-1} =\bar a_{t-1}^* 
	\]
	\[
	\text{for } 1 \leq t \leq T \]
\end{assumption}
\begin{lemma}\label{lemma:implied_pt} (Implied parallel trends) Under Assumptions \ref{asn:aec}-\ref{asn:cov_ind_u},
	\[
	\E[Y_t(\bar a^*)-Y_{t-1}(\bar a^*)|\bar W_k, \bar A_{k-1}=\bar a_{k-1}^*]=\E[Y_t(\bar a^*)-Y_{t-1}(\bar a^*)|\bar W_k, \bar A_k=\bar a_k^*] \text{ for } t\in \{1,2,...,\tau\}, k\leq t.
	\]
\end{lemma}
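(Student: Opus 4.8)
The plan is to condition everything on the unmeasured $U_0$, integrate it out, and thereby reduce the claim to showing that a single conditional mean is free of $u$. Writing $\delta \equiv Y_t(\bar a^*) - Y_{t-1}(\bar a^*)$ and applying iterated expectation over $U_0$, both sides of the asserted equality become integrals of $\E[\delta \mid U_0 = u, \bar W_k, \cdot]$ against the conditional law $dF(u \mid \bar W_k, \cdot)$, where the conditioning event is $\bar A_{k-1} = \bar a_{k-1}^*$ on the left and $\bar A_k = \bar a_k^*$ on the right. Crucially, these two laws of $U_0$ are \emph{not} equal in general, since $A_k$ is a descendant of $U_0$ (exactly the confounding the design permits), so I cannot finish by matching the $U_0$-distributions. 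Instead the argument hinges on showing that the integrand $q(u, \bar w_k) \equiv \E[\delta \mid U_0 = u, \bar W_k = \bar w_k, \bar A_{k-1} = \bar a_{k-1}^*]$ does not depend on $u$; once that holds, both integrals collapse to the same $u$-free quantity despite the mismatched $U_0$-laws. A preliminary application of latent ignorability (Assumption \ref{asn:ign}) at index $k$ shows that the right-hand integrand $\E[\delta \mid U_0 = u, \bar W_k, \bar A_k = \bar a_k^*]$ also equals $q(u,\bar w_k)$, so the two sides genuinely share one integrand.

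To show $q(u, \bar w_k)$ is free of $u$, I would bridge from the time-$k$ conditioning up to the time-$t$ conditioning at which additive equi-confounding is stated. Starting from $q(u,\bar w_k)$, I alternate two moves: (i) an iterated expectation introducing the next covariate $W_{s+1}$, and (ii) latent ignorability at index $s+1$, which lets me append $A_{s+1} = a_{s+1}^*$ to the conditioning set without changing the conditional mean of $\delta$, since the potential outcomes are independent of the observed $A_{s+1}$ given $U_0$ and history. Iterating for $s = k, \dots, t-1$ expresses $q(u,\bar w_k)$ as a nested expectation whose innermost term is $\E[\delta \mid U_0 = u, \bar W_t, \bar A_t = \bar a_t^*]$ and whose outer layers integrate over $w_{k+1}, \dots, w_t$ against the transition laws $dF(w_{s+1} \mid \bar w_s, \bar a_s^*, U_0 = u)$. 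Now Assumption \ref{asn:aec} makes the innermost term free of $u$, and Assumption \ref{asn:cov_ind_u} makes each transition law free of $u$; hence the entire nested expectation, and therefore $q(u,\bar w_k)$, depends only on $\bar w_k$.

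Combining the two steps finishes the argument: both the left- and right-hand conditional means equal $\int q(u,\bar w_k)\, dF(u \mid \bar W_k, \cdot) = q(\bar w_k)$, and this $u$-free value is identical for the two conditioning events. I would present the bridging claim as a short forward induction on $s$, with base case $s=k$ handled by latent ignorability and the inductive step by moves (i)--(ii). The main obstacle is precisely this index mismatch: additive equi-confounding is posited only at the full history $(\bar W_t, \bar A_t = \bar a_t^*)$, whereas parallel trends conditions on the shorter history $\bar W_k$ with $k \le t$, so the bulk of the work is the careful bookkeeping needed to transport equi-confounding forward while keeping the potential outcomes correctly tied to the $\bar a^*$ regime. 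Latent ignorability is what makes this transport legitimate, and conditional covariate-independence is what prevents the forward averaging from reintroducing dependence on $u$.
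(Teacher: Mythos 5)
Your proposal is correct and follows essentially the same route as the paper's Appendix~B proof: condition on $U_0$, use latent ignorability (Assumption~\ref{asn:ign}) to nest expectations forward from time $k$ to time $t$, then invoke additive equi-confounding (Assumption~\ref{asn:aec}) to strip $U_0$ from the innermost conditional mean and Assumption~\ref{asn:cov_ind_u} to strip it from the covariate transition laws, so the outer integral over the (mismatched) conditional laws of $U_0$ collapses to a common $u$-free quantity. Your explicit observation that the two conditional distributions of $U_0$ need not agree, so the argument must run through a $u$-free integrand, is exactly the implicit logic of the paper's final step.
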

\begin{proof}
\textit{Proof of Lemma \ref{lemma:implied_pt}:} To show that parallel trends (Assumption 3) are implied by Assumptions \ref{asn:aec}-\ref{asn:cov_ind_u}, we show both that
\begin{align}\label{temp2}
\nonumber\E[&Y_t(\bar{a}^*) - Y_{t-1}(\bar{a}^*)| \bar{A}_k=\bar{a}_k^*, \bar W_k]\\&=
\E_{W_{k+1}|\bar{A}_k=\bar a_k^*, \bar W_k}\bigg[
\cdot\cdot\cdot\E_{W_t|\bar A_{t-1} = \bar a_{t-1}^*, \bar W_{t-1}} \bigg\{
\E[Y_t(\bar{a}^*) - Y_{t-1}(\bar{a}^*) | \bar{A}_t=\bar{a}_t^*, \bar W_t]\bigg\}\cdot\cdot\cdot\bigg]
\end{align}
and
\begin{align}\label{temp3}
\nonumber\E[&Y_t(\bar{a}^*) - Y_{t-1}(\bar{a}^*)| \bar{A}_{k-1}=\bar{a}_{k-1}^*, \bar W_k]\\&=
\E_{W_{k+1}|\bar{A}_k=\bar a_k^*, \bar W_k}\bigg[
\cdot\cdot\cdot\E_{W_t|\bar A_{t-1} = \bar a_{t-1}^*, \bar W_{t-1}} \bigg\{
\E[Y_t(\bar{a}^*) - Y_{t-1}(\bar{a}^*) | \bar{A}_t=\bar{a}_t^*, \bar W_t]\bigg\}\cdot\cdot\cdot\bigg]
\end{align}
To see (\ref{temp2}) note that, for any $t,k$ such that $1\leq k\leq t\leq T$, by repeated applications of iterated expectation and Assumption \ref{asn:ign} we have:
\begin{align*}
&\E[Y_t(\bar{a}^*) - Y_{t-1}(\bar{a}^*)| \bar{A}_k=\bar{a}_k^*, \bar W_k] \\
&\small=\E_{U_0|\bar{A}_k=\bar{a}_k^*, \bar W_k} \bigg(
	\E_{W_{k+1}|\bar{A}_k=\bar a_k^*, \bar W_k, U_0}\bigg[
	\cdot\cdot\cdot\E_{W_t|\bar A_{t-1} = \bar a_{t-1}^*, \bar W_{t-1}, U_0} \bigg\{
	\E[Y_t(\bar{a}^*) - Y_{t-1}(\bar{a}^*) | \bar{A}_t=\bar{a}_t^*, \bar W_t, U_0]\bigg\}\cdot\cdot\cdot\bigg]\bigg)
\end{align*}
By Assumption \ref{asn:aec}:
\[
=\E_{U_0|\bar{A}_k=\bar{a}_k^*, \bar{W}_k} \bigg(
\E_{W_{k+1}|\bar{A}_k=\bar a_k^*, \bar W_k, U_0}\bigg[
\cdot\cdot\cdot\E_{W_t|\bar A_{t-1} = \bar a_{t-1}^*, \bar W_{t-1}, U_0} \bigg\{
\E[Y_t(\bar{a}^*) - Y_{t-1}(\bar{a}^*) | \bar{A}_t=\bar{a}_t^*, \bar{W}_t]\bigg\}\cdot\cdot\cdot\bigg]\bigg)
\]
By Assumption \ref{asn:cov_ind_u}:
\[
=\E_{U_0|\bar{A}_k=\bar{a}_k^*, \bar{W}_k} \bigg(
\E_{W_{k+1}|\bar{A}_k=\bar a_k^*, \bar W_k}\bigg[
\cdot\cdot\cdot\E_{W_t|\bar A_{t-1} = \bar a_{t-1}^*, \bar W_{t-1}} \bigg\{
\E[Y_t(\bar{a}^*) - Y_{t-1}(\bar{a}^*) | \bar{A}_t=\bar{a}_t^*, \bar{W}_t]\bigg\}\cdot\cdot\cdot\bigg]\bigg)
\]
\begin{equation}\label{temp4}
=\E_{W_{k+1}|\bar{A}_k=\bar a_k^*, \bar W_k}\bigg[
\cdot\cdot\cdot\E_{W_t|\bar A_{t-1} = \bar a_{t-1}^*, \bar W_{t-1}} \bigg\{
\E[Y_t(\bar{a}^*) - Y_{t-1}(\bar{a}^*) | \bar{A}_t=\bar{a}_t^*, \bar{W}_t]\bigg\}\cdot\cdot\cdot\bigg]
\end{equation}
The last equality holds because the inside terms do not depend on $U_0$. To see (\ref{temp3}), repeated applications of iterated expectation and Assumption \ref{asn:ign} again give:
\begin{align*}
	&\E[Y_t(\bar{a}^*) - Y_{t-1}(\bar{a}^*)| \bar{A}_{k-1}=\bar{a}_{k-1}^*, \bar{W}_k] \\
	&\tiny{=\E_{U_0|\bar{A}_{k-1}=\bar{a}_{k-1}^*, \bar{W}_k} \bigg(
	\E_{W_{k+1}|\bar{A}_k=\bar a_k^*, \bar W_k, U_0}\bigg[
	\cdot\cdot\cdot\E_{W_t|\bar A_{t-1} = \bar a_{t-1}^*, \bar W_{t-1}, U_0} \bigg\{
	\E[Y_t(\bar{a}) - Y_{t-1}(\bar{a}) | \bar{A}_t=\bar{a}_t^*, \bar{W}_t, U_0]\bigg\}\cdot\cdot\cdot\bigg]\bigg)}
\end{align*}
Similarly, by Assumptions \ref{asn:aec} and \ref{asn:cov_ind_u}:
\[
=\E_{U_0|\bar{A}_k=\bar{a}_k^*, \bar{W}_k} \bigg(
\E_{W_{k+1}|bar{A}_{k-1}=\bar{a}_{k-1}^*, \bar W_k}\bigg[
\cdot\cdot\cdot\E_{W_t|\bar A_{t-1} = \bar a_{t-1}^*, \bar W_{t-1}} \bigg\{
\E[Y_t(\bar{a}^*) - Y_{t-1}(\bar{a}^*) | \bar{A}_t=\bar{a}_t^*, \bar{W}_t]\bigg\}\cdot\cdot\cdot\bigg]\bigg)
\]
\[
=
\E_{W_{k+1}|\bar{A}_k=\bar a_k^*, \bar W_k}\bigg[
\cdot\cdot\cdot\E_{W_t|\bar A_{t-1} = \bar a_{t-1}^*, \bar W_{t-1}} \bigg\{
\E[Y_t(\bar{a}) - Y_{t-1}(\bar{a}) | \bar{A}_t=\bar{a}_t^*, \bar{W}_t]\bigg\}\cdot\cdot\cdot\bigg]
\]
which is exactly equal to (\ref{temp4}), thus proving that parallel trends (Assumption 3) hold whenever Assumptions \ref{asn:aec}-\ref{asn:cov_ind_u} hold. This ends the proof of Lemma \ref{lemma:implied_pt}.\\
\end{proof}

Thus, since Assumptions \ref{asn:aec}-\ref{asn:cov_ind_u} hold in the simulation data-generating distribution, parallel trends also holds. In particular, Assumptions \ref{asn:ign}-\ref{asn:cov_ind_u} are easy to see in the simulation setup, and Assumption \ref{asn:aec} holds because:
\begin{align*}
    \E[Y_t&(\bar a^*)|\bar W_t, \bar A_t=\bar a_t^*, U_0=u] - \E[Y_t(\bar a^*)|\bar W_t, \bar A_t=\bar a_t^*, U_0=u'] \\
    &= \E[Y_{t-1}(\bar a^*)|\bar W_t, \bar A_t=\bar a_t^*, U_0=u] - \E[Y_{t-1}(\bar a^*)|\bar W_t, \bar A_t=\bar a_t^*, U_0=u']) \\
    &=\theta (u-u')
\end{align*}
where $\theta$ is the constant (over $t$) linear model coefficient relating $Y_t$ to $U_0$ in the simulation data generating distribution described in Section 5.1.\\

\end{appendices}

\end{document}